\newcommand{\sgn}{\textrm{sgn}}
\newcommand{\R}{\mathbb{R}}
\newcommand{\E}{\mathbb{E}}
\newcommand{\pr}{\textrm{Pr}}
\newcommand{\ns}{\mathbb{NS}}
\newcommand{\as}{\mathbb{AS}}
\newtheorem{thm}{Theorem}
\newtheorem{prop}[thm]{Proposition}
\newtheorem{cor}[thm]{Corollary}
\newtheorem{lem}[thm]{Lemma}
\newtheorem*{claim}{Claim}
\newtheorem*{defn}{Definition}
\newtheorem*{rmk}{Remark}
\title{The Average Sensitivity of an Intersection of Half Spaces}
\author{Daniel M. Kane}
\begin{document}
\maketitle
\section{Introduction}\label{IntroSec}

One of the most important measures of the complexity of a Boolean function $f:\R^n\rightarrow \{\pm 1\}$ is that of its average sensitivity, namely
$$
\as(f) := \E_{x\sim_u\{\pm 1\}^n}\left[\#\{i:f(x)\neq f(x^i)\} \right]
$$
where $x^i$ above is $x$ with the $i^{th}$ coordinate flipped. The average sensitivity and related measures of noise sensitivity of a Boolean function have found several applications, perhaps most notably to the area of machine learning (see for example \cite{learning}). It has thus become important to understand how large the average sensitivity of functions in various classes can be.

Of particular interest is the study of the sensitivity of certain classes of algebraically defined functions. Gotsman and Linial (\cite{gl}) first studied the sensitivity of polynomial threshold functions (i.e. functions of the form $f(x)=\sgn(p(x))$ for $p$ a polynomial of bounded degree). They conjectured exact upper bounds on the sensitivity of polynomial threshold functions of limited degree, but were unable to prove them except in the case of linear threshold functions (when $p$ is required to be degree $1$). Since then, significant progress has been made towards proving this Conjecture. The first non-trivial bounds for large degree were proven in \cite{sensitivity} by Diakonikolas et. al. in 2010. Since then, progress has been rapid. In \cite{gsurf}, the Gaussian analogue of the Gotsman-Linial Conjecture was proved, and in \cite{glexp} the correct bound on average sensitivity was proved to within a polylogarithmic factor.

Another potential generalization of the degree-$1$ case of the Gotsman-Linial Conjecture (which bounds the sensitivity of the indicator function of a halfspace) would be to consider the sensitivity of the indictor function of the intersection of a bounded number of halfspaces. The Gaussian analogue of this question has already been studied. In particular, Nazarov has shown (see \cite{kn:learning}) that the Gaussian surface area of an intersection of $k$ halfspaces is at most $O(\sqrt{\log k})$. This suggests that the average sensitivity of such a function should be bounded by $O(\sqrt{n\log k})$. Although this bound has been believed for some time, attempts to prove it have been unsuccessful. Perhaps the closest attempt thus far was by Harsha, Klivans and Meka who show in \cite{regular} that an intersection of $k$ sufficiently regular halfspaces has noise sensitivity with parameter $\epsilon$ at most $\log(k)^{O(1)}\epsilon^{1/6}$. In this paper, we prove that the bound of $O(\sqrt{n\log(k)})$ is in face correct. In particular, we prove the following Theorem:
\begin{thm}\label{mainThm}
Let $f$ be the indicator function of an intersection of $k$ half spaces in $n$ variables, then
$$
\as(f) = O(\sqrt{n\log(k)}).
$$
\end{thm}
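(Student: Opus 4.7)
The plan is to reduce the Boolean claim to Nazarov's Gaussian surface area bound via an invariance principle together with a regularization step. Since $\as(f) = \Theta(n \cdot \ns_{1/n}(f))$ (where $\ns_\delta$ denotes Boolean noise sensitivity at rate $\delta$), it suffices to show $\ns_\delta(f) = O(\sqrt{\delta \log k})$ for small $\delta$. The Gaussian analogue $\gns_\delta(F) = O(\sqrt{\delta \log k})$ for the corresponding Gaussian function $F$ follows from Nazarov's theorem combined with Ledoux-type surface-area/noise-sensitivity inequalities. The entire task is therefore to transfer this Gaussian noise-sensitivity bound to the Boolean cube.

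Call a halfspace $\sgn(\sum_i a_{j,i} x_i - \theta_j)$ $\tau$-regular if $a_{j,i}^2/\|a_j\|^2 \le \tau$ for every $i$. For an intersection of $\tau$-regular halfspaces I would invoke an invariance-principle argument: approximate $f$ by smoothing each $\sgn$ at scale $\sqrt{\tau}$ and taking a smooth conjunction, apply the polynomial-form invariance principle of Mossel--O'Donnell--Oleszkiewicz to the resulting smooth surrogate, and control the mollification error via Littlewood--Offord-style anti-concentration, which says that for a regular linear form on $\{\pm 1\}^n$ the probability of landing within $\sqrt{\tau}$ of the threshold is $O(\sqrt{\tau})$, matching the Gaussian density.

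To extend to arbitrary halfspaces, I would use a conditioning reduction. Let $S$ be the union of high-influence coordinates across the $k$ halfspaces; a counting argument gives $|S| \le k/\tau$. Conditioning on $x_S$ leaves a random intersection of (nearly) $\tau$-regular halfspaces on the remaining coordinates, to which the regular-case bound applies. The sensitivity contribution from the $S$-directions is crudely at most $|S|$. Optimizing $\tau$ so that $|S| \approx \sqrt{n \log k}$ balances the two contributions and yields the claimed $O(\sqrt{n \log k})$ bound.

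The principal obstacle will be the invariance step: classical invariance principles lose factors that are polynomial in degree, and since an intersection of halfspaces is not itself a polynomial, one must mollify first and then carefully bookkeep the combined error. The mollification has to be tight enough that the total error remains $O(\sqrt{n \log k})$; losing even a polylogarithmic factor would reproduce only the existing Harsha--Klivans--Meka estimate. The key quantitative statement to establish is a Boolean analogue of Nazarov: the measure of the Boolean boundary strip of the intersection under noise $\delta$ is at most $O(\sqrt{\delta \log k})$, with the $\sqrt{\log k}$ factor arising from a carefully balanced union bound over which halfspace is critical at each near-boundary point.
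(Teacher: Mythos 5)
Your plan is essentially the Harsha--Klivans--Meka invariance-plus-regularity route from \cite{regular}, and as you yourself flag, that route is only known to yield $\ns_\epsilon(f) = \log(k)^{O(1)}\epsilon^{1/6}$, not the sharp $O(\sqrt{\epsilon\log k})$. The losses are not mere bookkeeping. The invariance step for a mollified conjunction of $k$ signs compounds anti-concentration errors across the $k$ halfspaces and across the degree of mollification, and no known sharpening removes the polynomial-in-$\epsilon$ loss. The regularization step is also more delicate than the count $|S| \le k/\tau$ suggests: conditioning on a halfspace's heavy coordinates does not in general leave a regular halfspace (the ``large critical index'' case needs a separate treatment, and the critical indices of the $k$ halfspaces need not align), and your proposed balancing $|S|\approx k/\tau \approx \sqrt{n\log k}$ forces $\tau \approx k/\sqrt{n\log k}$, which exceeds $1$ once $k \gg \sqrt{n\log k}$, taking the regularity parameter out of range for most $k$ of interest. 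So what you describe is a plan that at best reproduces \cite{regular}; establishing the ``Boolean Nazarov with no polylog loss'' via this route would need genuinely new ideas not contained in the proposal.

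The paper's argument is completely different and far more elementary: it never leaves the cube and uses no Gaussian comparison, invariance principle, or regularity decomposition --- only that each linear threshold function is unate. Writing $F_m = \bigvee_{i\le m} f_i$, $S_m = F_m - F_{m-1}$, and $p_m = \E[S_m]$, a pointwise case analysis exploiting unateness of $f_m$ gives $\as(F_m) - \as(F_{m-1}) \le 2\E\bigl[S_m(x)\sum_i x_i\bigr]$. A short tail estimate (Lemma~\ref{heightLem}) bounds this by $O(p_m\sqrt{n\log(1/p_m)})$, and summing over $m$ and applying concavity of $t\mapsto t\sqrt{\log(1/t)}$ together with $\sum_m p_m \le 1$ yields $\as(F) = O(\sqrt{n\log k})$. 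This sidesteps every difficulty you identify as the ``principal obstacle.''
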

It should also be noted that Nazarov's bound follows as a Corollary of Theorem \ref{mainThm}, by replacing Gaussian random variables with averages of Bernoulli random variables. It is also not hard to show that this bound is tight up to constants. In particular:
\begin{thm}\label{lowerBoundThm}
If $k\leq 2^n$, there exists a function $f$ in $n$ variables given by the intersection of at most $k$ half spaces so that
$$
\as(f)=\Omega(\sqrt{n\log(k)}).
$$
\end{thm}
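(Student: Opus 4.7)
The plan is to exhibit, for each $k\le 2^n$, an intersection of at most $k$ halfspaces with average sensitivity $\Omega(\sqrt{n\log k})$. The main construction is a ``Boolean cube in disguise'': take $d=\lfloor k/2\rfloor$ nearly orthogonal $\pm 1$ vectors $\epsilon_1,\ldots,\epsilon_d\in\{\pm 1\}^n$ (rows of a Hadamard matrix when $d\le n$, or i.i.d.\ uniform samples controlled by a union bound over the $\binom{d}{2}$ pairs of inner products otherwise, which is feasible for $d$ up to $2^{cn}$ for some small absolute constant $c>0$). Define the linear forms $Y_j(x)=\langle\epsilon_j,x\rangle/\sqrt{n}$, pick $t=\Theta(\sqrt{\log d})$ with $\Phi(-t)=1/(2d)$, and set $f(x)=1$ iff $|Y_j(x)|\le t$ for all $j$. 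Each condition $|Y_j|\le t$ is the intersection of two halfspaces in $x$, so $f$ is the indicator of an intersection of $2d\le k$ halfspaces.

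For the sensitivity, since each $Y_j$ is a sum of $n$ i.i.d.\ $\pm 1/\sqrt{n}$ terms and the $\epsilon_j$'s are approximately orthogonal, the joint law of $(Y_1,\ldots,Y_d)$ is close to that of a standard Gaussian in $\mathbb{R}^d$. Flipping coordinate $i$ shifts $Y(x)$ by $-2x_i\nu_i$, where $\nu_i=(\epsilon_{1i},\ldots,\epsilon_{di})/\sqrt{n}$, and the standard first-order Gaussian computation (viewing each per-coordinate sensitivity as the Gaussian measure of a thin shell along $\partial K$) gives
\[
\as(f)\;\approx\;2\int_{\partial K}\sum_{i=1}^n|\langle \eta(y),\nu_i\rangle|\,\varphi(y)\,dS(y),
\]
where $K=[-t,t]^d$, $\eta$ is its outward unit normal, and $\varphi$ is the standard $d$-dimensional Gaussian density. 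The critical observation is that $\eta$ is axis-aligned on every face of $K$: on the face with $\eta=\pm e_j$, $|\langle\eta,\nu_i\rangle|=|\epsilon_{ji}|/\sqrt{n}=1/\sqrt{n}$, so the inner sum equals $\sqrt{n}$ identically on $\partial K$ and $\as(f)\approx 2\sqrt{n}\,\gas(K)$. For the chosen $t$, a direct computation gives $\gas(K)=2d\,\varphi(t)(1-2\Phi(-t))^{d-1}=\Theta(\sqrt{\log d})$, and hence $\as(f)=\Omega(\sqrt{n\log k})$.

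The above construction only works when enough nearly-orthogonal $\pm 1$ vectors can be found, i.e., for $k\le 2^{cn}$. In the complementary range $k\ge 2^{cn}$, the target $\sqrt{n\log k}$ is $O(n)$ and a simpler choice suffices: take $f$ to be the parity of the first $r=\lfloor\log_2 k\rfloor+1$ coordinates. Then $\as(f)=r\ge\log_2 k\ge\sqrt{c}\cdot\sqrt{n\log k}$, and each of the $2^{r-1}\le k$ points $z\in\{\pm 1\}^r$ of the ``wrong'' parity can be separated from the rest of $\{\pm 1\}^n$ by the halfspace $\{x:\sum_{i\le r}z_ix_i\le r-2\}$, realizing $f$ as an intersection of at most $k$ halfspaces.

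The main obstacle is making the Gaussian approximation quantitative: the sensitivity identity requires $\pr[|Y_\ell|\le t\,\forall\ell\ne j]$ to remain $\Theta(1)$, which is automatic for exactly independent Gaussians but must be verified despite the small correlations introduced by the $\epsilon_j$'s. A careful multidimensional Berry--Esseen estimate, or a direct anticoncentration argument tailored to the cube event, suffices; given that, the rest of the calculation reduces to the exact identity $\sum_i|\langle\eta,\nu_i\rangle|=\sqrt{n}$ on the axis-aligned facets together with the elementary optimization of $d\,\varphi(t)(1-2\Phi(-t))^{d-1}$ in $t$.
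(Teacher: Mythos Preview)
Your construction --- an intersection of $d\approx k/2$ slabs with nearly-orthogonal normals --- is essentially the two-sided variant of what the paper does: the paper takes the union of $m\approx k$ random coordinate-reflections $f_{s_i}(x)=f(s_{i,1}x_1,\ldots,s_{i,n}x_n)$ of a single LTF $f=\mathbb{1}[\sum x_i>t]$ with $\E[f]\approx 1/k$, so its complement is exactly $\{\langle s_i,x\rangle\le t\ \forall i\}$ with i.i.d.\ uniform $s_i\in\{\pm1\}^n$. The substantive difference is in the analysis, and there your proposal has a real gap.

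The Gaussian heuristic $\as(f)\approx 2\sqrt{n}\,\gas([-t,t]^d)$ cannot be justified in the range $n<d\le 2^{cn}$ that you need. When $d>n$ the map $x\mapsto(Y_1,\ldots,Y_d)$ has image contained in an $n$-dimensional subspace of $\R^d$, so the joint law of $Y$ is singular and is nowhere near a standard $d$-dimensional Gaussian, no matter how small the pairwise inner products $\langle\epsilon_j,\epsilon_\ell\rangle$ are. No multidimensional Berry--Esseen estimate can bridge this, and the surface integral $\int_{\partial K}\varphi\,dS$ is simply the wrong object here. Your fallback ``direct anticoncentration argument'' is not specified, and the statement it would have to supply --- that conditioning on $\{|Y_\ell|\le t\ \forall\ell\neq j\}$ does not substantially depress $\pr[Y_j\in(t-2/\sqrt{n},\,t]]$ --- is not obvious under these strong linear dependencies. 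Since neither your cube construction (valid only up to $d\le n$, i.e.\ $k\le 2n$) nor the parity fallback (valid only for $k\ge 2^{\Omega(n)}$) covers the middle range, the outline as written does not prove the theorem.

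The paper sidesteps all of this with an elementary edge-counting argument that uses no CLT whatsoever. Each copy $f_{s_i}$ individually has $\as(f_{s_i})=\as(f)=\Omega(m^{-1}\sqrt{n\log k})$ by the one-dimensional lemma. For any fixed sensitive edge $(x,y)$ of $f_{s_i}$ and any $j\neq i$, one has $\pr_{s_j}[f_{s_j}(x)=1]=\E[f]\le 1/(4m)$, so by a union bound over $j$ and over $\{x,y\}$ the probability that some other copy already covers $x$ or $y$ is at most $1/2$. Hence in expectation at least half of each copy's sensitive edges survive in the union, giving $\E_{s}[\as(F)]\ge\tfrac{m}{2}\cdot\Omega(m^{-1}\sqrt{n\log k})=\Omega(\sqrt{n\log k})$. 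This is precisely the ``direct argument tailored to the cube event'' you were reaching for, it works uniformly for all $k\le 2^n$, and it makes the case split and the parity construction unnecessary.
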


Our proof of Theorem \ref{mainThm} actually uses very little information about halfspaces. In particular, we use only the fact that linear threshold functions are monotonic in the following sense:
\begin{defn}
We say that a function $f:\{\pm 1\}^n\rightarrow \R$ is \emph{unate} if for all $i$, $f$ is either increasing with respect to the $i^{th}$ coordinate or decreasing with respect to the $i^{th}$ coordinate.
\end{defn}

We prove Theorem \ref{mainThm} by means of the following much more general statement:
\begin{prop}\label{mainProp}
Let $f_1,\ldots,f_k:\{\pm 1\}^n\rightarrow \{0,1\}$, be unate functions and let $F:\{\pm 1\}^n\rightarrow \{0,1\}$ be defined as $F(x)=\bigvee_{i=1}^k f_i(x)$. Then
$$
\as(F) =  O(\sqrt{n\log(k)}).
$$
\end{prop}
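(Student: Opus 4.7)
The starting point is the identity $\infl_i(F) = 2\pr[F(x) = 1,\, F(x^i) = 0]$, which follows from the reversibility of the uniform measure. Substituting $y = x^i$ and writing $B := \{y : F(y) = 0\}$, this becomes $\infl_i(F) = 2\pr[y \in B,\, y^i \notin B]$. The event $\{y \in B,\, y^i \notin B\}$ forces every $f_l$ to vanish at $y$ while some $f_j(y^i) = 1$. Union bounding on the responsible $j$ gives
\[
\as(F) \;\leq\; 2\sum_{j=1}^{k} \E\!\left[\mathbf{1}_B(y)\, s_j^{\uparrow}(y)\right],
\]
where $s_j^{\uparrow}(y) := \#\{i : f_j(y) = 0,\, f_j(y^i) = 1\}$ is the up-sensitivity of $f_j$ at $y$. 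The unateness of $f_j$ pins down $\E[s_j^{\uparrow}(y)] = \as(f_j)/2$, and the standard Cauchy--Schwarz argument on the degree-one Fourier mass of a unate $\{0,1\}$-valued function gives $\as(f_j) \leq \sqrt n$. Dropping $\mathbf{1}_B$ therefore yields the trivial bound $\as(F) = O(k\sqrt n)$.

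To improve this by the desired factor of $\sqrt{k/\log k}$, the constraint $\mathbf{1}_B(y)$ must be exploited. The heuristic is that simultaneous vanishing of \emph{all} $k$ functions forces $y$ to lie deep in the zero-set of most $f_j$, so that for a typical such $y$ only $O(\log k)$ of the quantities $s_j^{\uparrow}(y)$ can be large. I would try to formalize this by attaching to each $f_j$ a real-valued surrogate $\phi_j$ (for instance its degree-one Fourier part, reoriented to make $f_j$ increasing) and controlling $\max_j \phi_j(y)$ on $y \in B$ by a sub-Gaussian maximum estimate. A complementary angle is a dyadic decomposition of the $f_j$'s by the scale of either $\pr[f_j = 1]$ or $\as(f_j)$, treating each of the $O(\log k)$ dyadic classes by the trivial isoperimetric bound and paying one $\log k$ factor at the summation step.

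The main obstacle I anticipate is carrying out the concentration step above using only unateness, without access to the linear structure of halfspaces that makes such maximum estimates easy. Concretely, what is needed is an inequality of the form
\[
\sum_{j=1}^{k} \E\!\left[\mathbf{1}_B(y)\, s_j^{\uparrow}(y)\right] \;\lesssim\; \sqrt{n\log k},
\]
with a matching statement for the symmetric problem obtained from the trivial exchange $F \leftrightarrow 1 - F$. Proving this should require a careful peeling of the functions $f_j$, combining the isoperimetric bound $\E[s_j^{\uparrow}] \leq \sqrt n /2$ on each piece with an entropy/covering accounting that extracts the $\log k$ from the simultaneity constraint encoded in $B$; this peeling is the heart of what I expect to be the technical core of the argument.
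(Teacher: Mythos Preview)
Your setup is sound through the inequality $\as(F) \le 2\sum_j \E[\mathbf{1}_B(y)\, s_j^{\uparrow}(y)]$, but the proposal stops exactly where the real work begins, and neither of your two heuristics closes the gap. The dyadic decomposition, as you describe it, would at best pay one $O(\sqrt n)$ per class and produce $O(\sqrt n\,\log k)$ rather than $O(\sqrt{n\log k})$---and even that presumes you can bound an entire dyadic class of unate functions by $O(\sqrt n)$, which is essentially the proposition you are trying to prove. The degree-one surrogate idea leans on linear structure that general unate functions need not have, as you yourself note. More fundamentally, the union bound over $j$ can overcount without limit: many $f_j$ may fire on the same neighbor $y^i$, and nothing in your framework makes the $k$ contributions disjoint or forces their sizes to sum to something bounded.

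The paper supplies precisely this missing disjointness. Rather than a static union bound, it telescopes: set $F_m=\bigvee_{j\le m} f_j$ and $S_m=F_m-F_{m-1}$, so the $S_m$ have pairwise disjoint supports with $p_m:=\E[S_m]$ satisfying $\sum_m p_m\le 1$. After reorienting so that $f_m$ is increasing, a short case check gives $\as(F_m)-\as(F_{m-1}) \le 2\,\E\bigl[S_m(x)\sum_i x_i\bigr]$. The second ingredient you are missing is a sharpening of the single-unate bound you already quote: for any $\{0,1\}$-valued $S$ with $\E[S]=p$, integrating against the Hoeffding tail yields $\E\bigl[S(x)\sum_i x_i\bigr]=O\bigl(p\sqrt{n\log(1/p)}\bigr)$. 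Summing over $m$ and using concavity of $p\mapsto p\sqrt{\log(1/p)}$ together with $\sum_m p_m\le 1$ then gives $\sum_m p_m\sqrt{\log(1/p_m)} = O(\sqrt{\log k})$, which is where the square root on the logarithm actually comes from. Your ``peeling'' instinct points the right way, but the correct variable to peel by is the \emph{incremental} mass $p_m$, not $\pr[f_j=1]$ or $\as(f_j)$.
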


The application of Theorem \ref{mainThm} to machine learning is via a slightly different notion of noise sensitivity than that of the average sensitivity. In particular, we define the \emph{noise sensitivity} as follows
\begin{defn}
Let $f:\{\pm 1\}^n\rightarrow \{0,1\}$ be a Boolean function. For a parameter $\epsilon\in (0,1)$ we define the noise sensitivity of $f$ with parameter $\epsilon$ to be
$$
\ns_\epsilon(f) := \pr(f(x)\neq f(y))
$$
where $x$ and $y$ are Bernoulli random variables where $y$ is obtained from $x$ by randomly and independently flipping the sign of each coordinate with probability $\epsilon$.
\end{defn}

Using this notation, we have that
\begin{cor}\label{nsBoundCor}
If $f:\{\pm 1\}^n\rightarrow\{0,1\}$ is the indicator function of the intersection of $k$ halfspaces, and $\epsilon\in(0,1)$ then
$$
\ns_\epsilon(f) = O(\sqrt{\epsilon \log(k)}).
$$
\end{cor}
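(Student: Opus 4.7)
The plan is to reduce to the average sensitivity bound of Theorem \ref{mainThm} via a blow-up construction, combined with Peres's theorem for the noise sensitivity of the block-majority.

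Given $\epsilon\in(0,1)$, set $M = \lceil 1/\epsilon\rceil$ and define $\tilde f:\{\pm 1\}^{nM}\to\{0,1\}$ by replacing each defining halfspace $\sgn(a^{(\ell)}\cdot y - b^{(\ell)})$ of $f$ by the lifted halfspace
\[ \tilde h_\ell(x) = \sgn\Bigl(\sum_{i,j} a^{(\ell)}_i x_{i,j} - M b^{(\ell)}\Bigr), \]
and letting $\tilde f = \bigwedge_\ell \tilde h_\ell$. Since $\tilde f$ is still an intersection of $k$ halfspaces, now in $nM$ variables, Theorem \ref{mainThm} gives
\[ \as(\tilde f) = O\bigl(\sqrt{nM\log k}\bigr) = O\bigl(\sqrt{n\log(k)/\epsilon}\bigr). \]

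Next, I would couple a Bernoulli $\epsilon$-noise pair $(y,y')$ for $f$ to a perturbation of a uniform $x\in\{\pm 1\}^{nM}$. Sample $x$ uniformly, set $y_i = \sgn(\sum_j x_{i,j})$, and obtain $x'$ from $x$ by flipping each coordinate independently with probability $p$, defining $y'_i = \sgn(\sum_j x'_{i,j})$. By Peres's theorem applied to the block-majority function (equivalently, the $k=1$ case of the corollary applied to a single symmetric halfspace), the induced marginal noise rate on each coordinate $y_i$ is $\ns_p(\mathrm{Maj}_M) = \Theta(\sqrt p)$ for $p\geq 1/M$. Choosing $p = \Theta(\epsilon^2)$, the pair $(y,y')$ is exactly a Bernoulli $\epsilon$-noise pair on $\{\pm 1\}^n$, and therefore $\ns_\epsilon(f) = \ns_p(\tilde f)$.

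The main obstacle is extracting a tight enough bound on $\ns_p(\tilde f)$ from $\as(\tilde f)$. The naive estimate $\ns_p(\tilde f)\leq p\cdot\as(\tilde f)$ only gives $O(\epsilon^{3/2}\sqrt{n\log k})$, which is larger than the target by a factor of $\sqrt{n\epsilon}$. To close this gap, I would combine (i) the Fourier formula $\ns_p(\tilde f) = \tfrac{1}{2}\sum_S(1 - (1-2p)^{|S|})\hat{\tilde g}(S)^2$, where $\tilde g = 2\tilde f - 1$, with (ii) the level-set tail bound $W_{>d}(\tilde f) \le \as(\tilde f)/d$ from Markov's inequality applied to Theorem \ref{mainThm}, and (iii) the anti-concentration estimate from Peres's theorem for each individual lifted halfspace, which sharpens the Fourier tail of $\tilde f$ beyond what average sensitivity alone yields. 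Splitting the Fourier sum at an optimal degree cutoff $d$ and optimizing over both $d$ and the blow-up parameter $M$ should then give the claimed $O(\sqrt{\epsilon\log k})$ bound.
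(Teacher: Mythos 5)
Your plan goes in the wrong direction, and this is not a cosmetic issue but a structural one. The paper's argument \emph{reduces} the dimension: for $\epsilon = 1/m$ it randomly partitions the $n$ coordinates into $m$ bins, observes that the noise pair $(x,y)$ can be sampled by (a) fixing the bins and a ``background'' $z$, (b) choosing bin signs $b\in\{\pm 1\}^m$, and (c) flipping one random bin, and then notes that conditionally $f(x) = g(b)$ for a function $g$ that is still an intersection of $k$ halfspaces \emph{in only $m$ variables}. Applying Theorem \ref{mainThm} to $g$ gives $\as(g) = O(\sqrt{m\log k})$, and $\ns_\epsilon(f) = \E[\as(g)/m] = O(\sqrt{\log(k)/m})$, with $n$ disappearing entirely. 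Your blow-up to $nM$ variables does the opposite: $\as(\tilde f)=O(\sqrt{nM\log k})$ reintroduces $\sqrt n$, and — as you candidly note — the naive bound then leaves you off by $\sqrt{n\epsilon}$. The steps you propose to close that gap are the hard part of the problem: a Markov tail $W_{>d}\leq\as(\tilde f)/d$ carries the same $\sqrt{nM}$ scaling, and ``anti-concentration from Peres's theorem for each lifted halfspace'' is essentially the $k=1$ case of the very corollary being proved; no mechanism is given for making the product of these ingredients dimension-free.

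There is also a concrete error in the coupling. You want $f(y)=\tilde f(x)$ with $y_i = \sgn(\sum_j x_{i,j})$, but your lifted halfspace is $\tilde h_\ell(x)=\sgn(\sum_{i,j} a^{(\ell)}_i x_{i,j} - Mb^{(\ell)})$, which uses the raw block sums $\sum_j x_{i,j}$ linearly. The original $h_\ell(y)=\sgn(\sum_i a^{(\ell)}_i\sgn(\sum_j x_{i,j}) - b^{(\ell)})$ applies $\sgn$ \emph{inside}. These disagree whenever some block sum has large magnitude, so $\tilde f(x)\neq f(y)$ in general and $\ns_\epsilon(f)\neq\ns_p(\tilde f)$. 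If you instead \emph{define} $\tilde f(x)=f(\sgn(\sum_j x_{1,j}),\ldots)$ to make the coupling exact, $\tilde f$ is a composition of halfspaces with block-majorities, not an intersection of halfspaces, and Theorem \ref{mainThm} no longer applies to it. Either way the argument breaks. The fix is to abandon the blow-up and use the binning reduction, which both preserves the class (intersection of $k$ halfspaces) and shrinks the ambient dimension to $m=1/\epsilon$.
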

\begin{rmk}
This is false in general for intersections of unate functions, since if $f$ is the tribes function on $n$ variables (which is unate) then $\ns_\epsilon(f)=\Omega(1)$ so long as $\epsilon=\Omega(\log^{-1}(n))$.

It should also be noted that this problem has been considered in the case when $k=1$ in \cite{ns}, and proven (again for $k=1$) in \cite{yuval}.
\end{rmk}

Finally, using the $L^1$ polynomial regression algorithm of \cite{learning}, we obtain the following:
\begin{cor}\label{LearningCor}
The concept class of intersections of $k$ halfspaces with respect to the uniform distribution on $\{\pm 1\}^n$ is agnostically learnable with error $\mathbf{opt}+\epsilon$ in time $n^{O(\log(k)\epsilon^{-2})}$.
\end{cor}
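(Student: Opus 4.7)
The plan is to invoke the $L^1$-polynomial regression algorithm of \cite{learning} as a black box. That framework agnostically learns any concept class with respect to the uniform distribution on $\{\pm 1\}^n$, with error $\mathbf{opt}+\epsilon$ and running time $n^{O(d)}$, as soon as every concept admits a polynomial $L^1$-approximator of degree $d$ with error at most $\epsilon$. It therefore suffices to exhibit, for every indicator $f$ of an intersection of $k$ halfspaces, a polynomial of degree $d = O(\log(k)\epsilon^{-c})$ approximating $f$ in $L^1$ to within $\epsilon$, for a suitable constant $c$; the running time then follows by matching exponents.

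The key step is to translate the noise-sensitivity estimate of Corollary~\ref{nsBoundCor} into a Fourier tail bound. Writing $\ns_\delta(f) = \E[(f(x)-f(y))^2]$ and expanding via Plancherel yields
\[
\ns_\delta(f) \;=\; 2\sum_{S\subseteq [n]} \hat f(S)^2\bigl(1-(1-2\delta)^{|S|}\bigr).
\]
For $|S|>d$ and $\delta = 1/d$ the weight $1-(1-2\delta)^{|S|}$ is bounded below by a positive absolute constant, so
\[
\sum_{|S|>d}\hat f(S)^2 \;=\; O(\ns_{1/d}(f)) \;=\; O\bigl(\sqrt{\log(k)/d}\bigr),
\]
where the last bound is Corollary~\ref{nsBoundCor}. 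Letting $p$ be the Fourier truncation of $f$ to degree at most $d$ thus gives $\E[(f-p)^2] = O(\sqrt{\log(k)/d})$, and Cauchy--Schwarz yields $\E[|f-p|] = O((\log(k)/d)^{1/4})$.

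Choosing $d$ just large enough that this $L^1$-error drops below $\epsilon$ produces a degree $d = O(\log(k)\epsilon^{-c})$ for an appropriate constant $c$, and feeding $p$ into the $L^1$-polynomial regression theorem of \cite{learning} yields a hypothesis of error at most $\mathbf{opt}+\epsilon$ in time $n^{O(d)}$. Both of the genuine technical ingredients---the noise sensitivity estimate and the agnostic learning theorem---are used purely as black boxes, so there is no substantive obstacle; the only care required is the bookkeeping that relates the noise rate $\delta$, the truncation level $d$, and the resulting $L^1$-approximation error to the desired accuracy $\epsilon$.
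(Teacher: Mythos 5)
Your high-level route is the same as the paper's: bound the Fourier tail beyond degree $d$ via Corollary~\ref{nsBoundCor} and the noise-sensitivity/Fourier identity, then hand the result to the $L^1$-polynomial regression machinery of \cite{learning}. However, your bookkeeping does not actually deliver the stated running time, and the reason is the extra Cauchy--Schwarz step.

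You derive $\sum_{|S|>d}\hat f(S)^2 = O(\sqrt{\log(k)/d})$, then pass to $\E[|f-p|] = O((\log(k)/d)^{1/4})$ via Cauchy--Schwarz, and finally choose $d$ so that this $L^1$-error is at most $\epsilon$. That last inequality forces $(\log(k)/d)^{1/4}\le\epsilon$, i.e.\ $d \ge \log(k)\,\epsilon^{-4}$. So the constant $c$ you leave unspecified is $4$, not $2$, and the argument as written only gives $n^{O(\log(k)\epsilon^{-4})}$ --- a strictly weaker claim than Corollary~\ref{LearningCor}. The paper avoids this quadratic loss by not converting to $L^1$ at all: it stops at the Fourier-tail estimate $\sum_{|S|>d}|\hat f(S)|^2<\epsilon$ with $d=O(\log(k)\epsilon^{-2})$ and invokes \cite{learning}, Remark 4, which is stated in terms of Fourier concentration (an $L^2$-type hypothesis), not in terms of an $L^1$-approximator. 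To repair your argument you should invoke the Fourier-concentration form of the \cite{learning} guarantee directly, rather than manufacturing an $L^1$-approximating polynomial by hand; the intermediate Cauchy--Schwarz is precisely what costs the extra factor of two in the exponent of $\epsilon$.
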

\begin{rmk}
The problem of learning intersections of halfspaces was considered in \cite{learninghalfspaces}, where they achieved a bound of $n^{O(k^2/\epsilon^2)}$, which is substantially improved by the above.
\end{rmk}

\section{Proofs of the Sensitivity Bounds}\label{mainSec}

The proof of Proposition \ref{mainProp} follows by a fairly natural generalization of one of the standard proofs for the case of a single unate function. In particular, if $f:\{\pm 1\}^n\rightarrow\{0,1\}$ is unate, we may assume without loss of generality that $f$ is increasing in each coordinate. In such a case, it is easy to show that $$\as(f)=\E\left[ f(x)\sum_{i=1}^n x_i \right] \leq \E\left[\max\left(0,\sum_{i=1}^n x_i\right) \right]=O(\sqrt{n}).$$ In fact, this technique can be extended to prove bounds on the sensitivity of unate functions with given expectation. In particular, Lemma \ref{heightLem} below provides an appropriate bound. Our proof of Proposition \ref{mainProp} turns out to be a relatively straightforward generalization of this technique. In particular, we show that by adding the $f_i$ one at a time, the change in sensitivity is bounded by a similar function of the change in total expectation.

\begin{lem}\label{heightLem}
Let $S:\{\pm 1\}^n\rightarrow \{0,1\}$ and let $p=\E[S(x)]$, then
$$
\E\left[S(x)\sum_{i=1}^n x_i \right] = O(p\sqrt{n\log(1/p)}).
$$
\end{lem}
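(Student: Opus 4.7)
Write $T(x) = \sum_{i=1}^n x_i$. Since $S \geq 0$, the contribution to $\E[S\,T]$ from the event $T < 0$ is non-positive, and so
$$\E[S(x) T(x)] \leq \E[S(x) \max(0, T(x))].$$
My first step is a rearrangement (``bathtub'') inequality: among all $S$ with $0 \leq S \leq 1$ and $\E[S] = p$, the quantity $\E[S \cdot \max(0, T)]$ is maximized (up to a boundary adjustment that does not affect asymptotics) when $S$ is the indicator of an upper level set $\{T \geq \tau\}$ with $\pr(T \geq \tau) = p$. The one-line justification is the pointwise inequality
$$(\mathbf{1}_{T \geq \tau} - S)(\max(0,T) - \tau) \geq 0,$$
valid for any $\tau \geq 0$ since the two factors share a sign in each of the cases $T \geq \tau$ and $T < \tau$; integrating and using the mean constraint $\E[S] = p = \pr(T \geq \tau)$ yields $\E[S \max(0,T)] \leq \E[T \cdot \mathbf{1}_{T \geq \tau}]$.

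Next I would bound the tail expectation. By Hoeffding's inequality, $\pr(T \geq t) \leq e^{-t^2/(2n)}$, so the choice $\tau = \sqrt{2n \log(1/p)}$ ensures $\pr(T \geq \tau) \leq p$. A standard integration by parts then gives
$$\E[T \cdot \mathbf{1}_{T \geq \tau}] = \tau \pr(T \geq \tau) + \int_\tau^\infty \pr(T > t)\,dt \leq p\tau + \frac{n}{\tau} e^{-\tau^2/(2n)} \leq p\tau + \frac{np}{\tau},$$
using the Mills-ratio estimate $\int_\tau^\infty e^{-t^2/(2n)} dt \leq (n/\tau) e^{-\tau^2/(2n)}$. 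Substituting the chosen $\tau = \Theta(\sqrt{n \log(1/p)})$, the two terms contribute $O(p\sqrt{n \log(1/p)})$ and $O(p \sqrt{n/\log(1/p)})$ respectively, both within the claimed bound.

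The caveats are minor. When $p$ is bounded away from $0$ (so $\log(1/p)$ need not be $\Omega(1)$), the lemma follows trivially from $\E[S\,T] \leq \E[\max(0, T)] = O(\sqrt{n}) = O(p \sqrt{n})$. The atomic boundary behavior of the Binomial $T$ at the threshold $\tau$ is handled by allowing $S$ to be fractional on a single level set, which loses at most $O(\tau p)$ and so does not affect the bound. I do not anticipate a serious obstacle --- the rearrangement step is a one-line exchange argument, and what remains is a routine tail-integral computation; the only mild discipline required is tracking constants through Hoeffding carefully enough that the right $\sqrt{\log(1/p)}$ factor emerges.
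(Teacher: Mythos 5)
Your proof is correct, and at its core it is the same estimate the paper makes, packaged slightly differently. The paper's proof goes directly: $\E[ST]\le\int_0^\infty\pr(ST>y)\,dy\le\int_0^\infty\min\bigl(p,\pr(T>y)\bigr)\,dy$, the $\min$ being the trivial consequence of $S\le 1$ and $\{ST>y\}\subset\{S=1\}\cap\{T>y\}$ for $y>0$; it then inserts the sub-Gaussian tail and splits the integral at $\sqrt{n\log(1/p)}$. Your bathtub step proves exactly that $\int_0^\infty\min(p,\pr(T>y))\,dy$ is the extremal value of $\E[S\max(0,T)]$ over $\{0\le S\le 1,\ \E[S]=p\}$: since $\pr(T\mathbf{1}_{T\ge\tau}>y)=\min(\pr(T\ge\tau),\pr(T>y))$ for $y>0$, the quantity $\E[T\mathbf{1}_{T\ge\tau}]$ you arrive at is literally the same integral, and the tail computation you then do (Hoeffding plus Mills ratio) matches the paper's final two lines. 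So this is the same approach, with the rearrangement lemma making explicit the extremality that the paper leaves implicit in the $\min$. Two small housekeeping points worth tightening: your integration-by-parts line implicitly uses $\pr(T\ge\tau)=p$ from the rearrangement step but then substitutes the Hoeffding choice of $\tau$ for which only $\pr(T\ge\tau)\le p$; as you note, the slack is at most an additive $\tau p$, which is within budget, but this should be said where the substitution happens rather than only in the closing caveat. And the closing caveat itself, bounding by $O(p\sqrt n)$ when $p$ is bounded away from $0$, does not by itself reproduce $O(p\sqrt{n\log(1/p)})$ when $\log(1/p)=o(1)$; the paper has the same blemish and both are harmless for the application (one can replace $\log(1/p)$ by $1+\log(1/p)$, or restrict to $p\le 1/2$ without loss), but it is worth being honest that the stated constant-$p$ fallback does not literally close that edge case.
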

\begin{proof}
Note that:
\begin{align*}
\E\left[S(x)\sum_{i=1}^n x_i \right] & \leq \int_0^\infty \pr\left(S(x)\sum_{i=1}^n x_i > y\right) dy\\
& \leq \int_0^\infty \min\left( p , \pr\left(\sum_{i=1}^n x_i > y\right) \right) dy\\
& \leq \int_0^\infty \min\left( p , \exp\left(-\Omega\left( y^2/n\right)\right) \right) dy\\
& \leq O\left(\int_0^\infty \min\left( p , \exp\left(- z^2/n \right) dz \right)\right)\\
& \leq O\left( \int_0^{\sqrt{n\log(1/p)}}pdz + \int_{\sqrt{n\log(1/p)}}^\infty \exp(-z^2/n)dz\right)\\
& \leq O(p\sqrt{n\log(1/p)}).
\end{align*}
\end{proof}

We now prove Proposition \ref{mainProp}.
\begin{proof}
Let $F_m=\bigvee_{i=1}^m f_i(x)$. Let $S_m(x) = F_m(x) - F_{m-1}(x)$. Let $p_m = \E[S_m(x)]$. Our main goal will be to show that $\as(F_m)\leq \as(F_{m-1})+O(p_m\sqrt{n\log(p_m)})$, from which our result follows easily.

Consider $\as(F_m)-\as(F_{m-1})$. We assume without loss of generality that $f_m$ is increasing in every coordinate.
\begin{align*}
\as(F_m)-\as(F_{m-1}) & = \sum_{i=1}^n \E\left[\left|F_m(x)-F_m(x^i) \right| - \left|F_{m-1}(x)-F_{m-1}(x^i) \right| \right],
\end{align*}
where $x^i$ denotes $x$ with the $i^{th}$ coordinate flipped. We make the following claim:
\begin{claim}
For each $x,i$,
\begin{align}\label{sgnEqn}
\left|F_m(x)-F_m(x^i) \right| & - \left|F_{m-1}(x)-F_{m-1}(x^i) \right| \notag \\ &  \leq  x_i\left(\left(F_m(x)-F_m(x^i) \right) - \left(F_{m-1}(x)-F_{m-1}(x^i) \right)\right).
\end{align}
\end{claim}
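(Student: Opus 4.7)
My plan is to prove the inequality pointwise by reducing it to a simple monotonicity statement relating $F_m$ and $F_{m-1}$.

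First I would observe that the claim is symmetric under swapping $x$ with $x^i$: this swaps $F_m(x)$ with $F_m(x^i)$ and $F_{m-1}(x)$ with $F_{m-1}(x^i)$, and sends $x_i$ to $-x_i$, so both sides of (\ref{sgnEqn}) are preserved. Hence I may assume $x_i = +1$. Abbreviate $a = F_{m-1}(x)$, $b = F_{m-1}(x^i)$, $c = F_m(x)$, $d = F_m(x^i)$, each in $\{0,1\}$. Using the identity $|z| - z = 2\max(-z,0)$ and dividing through by $2$, the desired inequality (with $x_i = +1$) is equivalent to
$$
\max(d - c,\, 0) \;\leq\; \max(b - a,\, 0).
$$
Since $F_m = F_{m-1} \vee f_m$, we have $c = \max(a, f_m(x))$ and $d = \max(b, f_m(x^i))$; the standing assumption that $f_m$ is increasing in every coordinate, together with $x_i = 1$, gives $f_m(x) \geq f_m(x^i)$.

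To verify the displayed inequality I would do a one-line case analysis. The left-hand side is nonzero only when $d = 1$ and $c = 0$. But $c = 0$ forces both $a = 0$ and $f_m(x) = 0$; monotonicity then forces $f_m(x^i) = 0$; and then $d = 1$ forces $b = 1$. Hence $b - a = 1 = d - c$, and the inequality holds with equality. The main (very mild) obstacle is choosing the right reformulation: once the statement is rewritten in terms of positive parts, the case analysis collapses into a single chain of implications using only the monotonicity of $f_m$ and the fact that $F_m$ is obtained from $F_{m-1}$ by an OR with $f_m$.
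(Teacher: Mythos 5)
Your proof is correct, and it uses essentially the same underlying ingredients as the paper's argument: the swap symmetry between $x$ and $x^i$ to reduce to $x_i=1$, the monotonicity of $f_m$ in the $i$th coordinate, and the identity $F_m=F_{m-1}\vee f_m$. The difference is organizational: the paper splits first on whether $f_m(x)=f_m(x^i)=0$ (Case 1 trivial, Case 2 via two separate sign inequalities that it adds), whereas you fold the whole claim into the single positive-part inequality $\max(d-c,0)\leq\max(b-a,0)$ via $|z|-z=2\max(-z,0)$ and then observe that the only way the left side can be positive forces equality. Your rewrite is a slightly cleaner packaging of the same case analysis, and a sharp eye will notice it in fact shows the original inequality is tight whenever $f_m$ is not responsible for the transition, which the paper's presentation leaves implicit.
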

\begin{proof}
Our proof is based on considering two different cases.

\noindent \textbf{Case 1:} $f_m(x)=f_m(x^i)=0$

In this case, $F_m(x)=F_{m-1}(x)$ and $F_m(x^i)=F_{m-1}(x^i)$, and thus both sides of Equation \eqref{sgnEqn} are 0.

\noindent \textbf{Case 2:} $f_m(x)=1$ or $f_m(x^i)=1$

Note that replacing $x$ by $x^i$ leaves both sizes of Equation \eqref{sgnEqn} the same. We may therefore assume without loss of generality that $x_i=1$. Since $f_m$ is increasing with respect to the $i^{th}$ coordinate, $f_m(x)\geq f_m(x^i)$. Since at least one of them is 1, $f_m(x)=1$. Therefore, $F_m(x)=1$. Therefore, since
$$
x_i\left(F_m(x)-F_m(x^i)\right) \geq \left|F_m(x)-F_m(x^i) \right|,
$$
and
$$
- x_i\left(F_{m-1}(x)-F_{m-1}(x^i) \right) \geq - \left|F_{m-1}(x)-F_{m-1}(x^i) \right|,
$$
Equation \eqref{sgnEqn} follows.
\end{proof}
By the claim we have that
\begin{align*}
\as(F_m)-\as(F_{m-1}) & \leq \sum_{i=1}^n \E\left[x_i\left(\left(F_m(x)-F_m(x^i) \right) - \left(F_{m-1}(x)-F_{m-1}(x^i) \right)\right)  \right]\\
& = \sum_{i=1}^n \E\left[x_i\left(S_m(x)-S_m(x^i) \right) \right]\\
& = \sum_{i=1}^n\E\left[x_iS_m(x) \right] - \sum_{i=1}^n \E\left[ (-y_i)S_m(y)\right]\\
& = 2\E\left[S_m(x)\sum_{i=1}^n x_i \right]\\
& = O(p_m\sqrt{n\log(1/p_m)}).
\end{align*}
Where the on the third line above, we are letting $y=x^i$, and the last line is by Lemma \ref{heightLem}.

Hence, we have that
\begin{align*}
\as(F) & = \sum_{m=1}^k \as(F_m)-\as(F_{m-1})\\
& = O\left(\sqrt{n}\sum_{m=1}^k p_m\sqrt{\log(1/p_m)} \right).
\end{align*}
Let $P=\E[F(x)] = \sum_{m=1}^k p_m$. By concavity of the function $x\sqrt{\log(1/x)}$ for $x\in (0,1)$, we have that
$$
\as(F) = O(\sqrt{n}P\sqrt{\log(k/P)}) = O(\sqrt{n\log(k)}).
$$
This completes our proof.
\end{proof}

Theorem \ref{mainThm} follows from Proposition \ref{mainProp} upon noting that $1-f$ is a disjunction of $k$ linear threshold functions, each of which is unate. Our proof of Theorem \ref{mainThm} shows that the bound can be tight only if a large number of the halfspaces cut off an incremental volume of roughly $1/k$. It turns out that this bound can be achieved when we take a random collection of halfspaces with such volumes. Before we begin to prove Theorem \ref{lowerBoundThm}, we need the following Lemma:

\begin{lem}\label{singleLTFLem}
For an integer $n$ and $1/2>\epsilon>2^{-n}$ there exists a linear threshold function $f:\{\pm 1\}^n\rightarrow \{0,1\}$ so that
$$
\E_{x}[f(x)] \geq \epsilon,
$$
and
$$
\as(f) = \Omega(\E_x[f(x)] \sqrt{n\log(1/\epsilon)}).
$$
\end{lem}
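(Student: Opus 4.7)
I would take $f$ to be the symmetric linear threshold function $f(x) = \mathbf{1}[x_1 + \cdots + x_n \geq t]$ for a nonnegative integer threshold $t$ to be determined. The standard monotone identity $\as(f) = \E[f(x)\sum_i x_i]$ used at the start of Section~\ref{mainSec} (up to an absolute constant), combined with the pointwise fact that $\sum_i x_i \geq t$ on the support of $f$, gives
$$
\as(f) \;=\; \Omega\bigl(t \cdot \E[f(x)]\bigr).
$$
Hence it suffices to choose $t$ so that $\E[f(x)] \geq \epsilon$ and $t = \Omega(\sqrt{n\log(1/\epsilon)})$ hold simultaneously.

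The natural choice is to let $t$ be the largest nonnegative integer of the same parity as $n$ for which $\pr(\sum_i x_i \geq t) \geq \epsilon$, so that $\E[f(x)] \geq \epsilon$ by construction. To lower bound $t$ I would use the standard Rademacher anti-concentration inequality
$$
\pr\!\left(\sum_{i=1}^n x_i \geq s\right) \;\geq\; \frac{c_1}{\max(1,\, s/\sqrt{n})}\,\exp(-c_2 s^2/n) \qquad \text{for } 0 \leq s \leq n/2,
$$
valid for absolute constants $c_1, c_2 > 0$; this is a consequence of Stirling's approximation to $\binom{n}{\lceil (n+s)/2 \rceil}$ combined with the entropy expansion $1 - H(1/2 + u) = \Theta(u^2)$, or equivalently a discrete Mill's-ratio bound. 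By maximality of $t$ we have $\pr(\sum_i x_i \geq t+2) < \epsilon$; applying the estimate above at $s = t+2$ and taking logarithms rearranges to $(t+2)^2/n = \Omega(\log(1/\epsilon))$ (the $\log\max(1,(t+2)/\sqrt{n})$ correction being subdominant). In the remaining regime of constant $\epsilon$ the CLT directly gives $t = \Omega(\sqrt{n}) = \Omega(\sqrt{n\log(1/\epsilon)})$ because $\log(1/\epsilon) = O(1)$ there. The hypothesis $\epsilon > 2^{-n}$ keeps $t = O(n)$, so we stay inside the valid Stirling range throughout.

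The one nontrivial step is the anti-concentration estimate, and the main care is in having a form strong enough to cover the full range $\epsilon \in (2^{-n}, 1/2)$ uniformly---the Mill's-ratio prefactor $1/\max(1, s/\sqrt{n})$ is precisely what rules out a spurious $\log n$ loss at moderate $\epsilon$ (a cruder flat $1/\sqrt{n}$ prefactor would fail to give the tight $\sqrt{n\log(1/\epsilon)}$ scaling when $\epsilon$ is a modest polynomial in $1/n$). Beyond this, the argument reduces to the monotonicity identity and an elementary logarithmic manipulation.
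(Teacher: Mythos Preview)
Your approach is exactly the paper's: take $f(x)=\mathbf{1}\bigl[\sum_i x_i \geq t\bigr]$ with $t$ the largest threshold giving $\E[f]\geq\epsilon$, and then read off $\as(f)=2\,\E\bigl[f(x)\sum_i x_i\bigr]\geq 2t\,\E[f]$ together with a Rademacher tail lower bound to force $t=\Omega(\sqrt{n\log(1/\epsilon)})$. The paper's own proof is a one-line sketch of precisely this construction, so you have simply filled in the intended details.

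One small caveat: the sentence ``in the remaining regime of constant $\epsilon$ the CLT directly gives $t=\Omega(\sqrt n)$'' is not uniform as $\epsilon\uparrow 1/2$. For instance at $\epsilon=1/2-1/\sqrt n$ your maximal $t$ is $O(1)$, so the crude bound $\as(f)\geq 2t\,\E[f]$ collapses even though $\sqrt{n\log(1/\epsilon)}=\Theta(\sqrt n)$. The chosen $f$ is still fine there---one just computes $\as(f)=2\,\E\bigl[\mathbf{1}[\sum_i x_i\geq t]\sum_i x_i\bigr]=\Theta(\sqrt n)$ directly whenever $t=O(\sqrt n)$---and in any case the only application (proof of Theorem~\ref{lowerBoundThm}) uses $\epsilon=1/k\leq 1/3$, where your argument goes through verbatim.
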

\begin{proof}
This is easily seen to be the case if we let $f(x)$ be the indicator function of $\sum_{i=1}^n x_i > t$ for $t$ as large as possible so that this event takes place with probability at least $\epsilon.$
\end{proof}

\begin{proof}[Proof of Theorem \ref{lowerBoundThm}]
We note that it suffices to show that there is such as $f$ given as the indicator function of a union of at most $k$ half-spaces, as $1-f$ will have the same average sensitivity and will be the indicator function of an intersection. Let $\epsilon=1/k$, and let $f$ be the function given to us in Lemma \ref{singleLTFLem}. We note that if $\E[f(x)] > 1/4$, then $f$ is sufficient and we are done. Otherwise let $m=\lfloor 1/(4\E[f(x)]) \rfloor\leq k$. For $s\in\{\pm 1\}^n$ let $f_s(x)=f(s_1x_1,\ldots,s_nx_n).$ We note for each $s$ that $f_s(x)$ is a linear threshold function with $\E[f_s(x)] = \E[f(x)]$ and $\as(f_s)=\as(f)$.

Let
$$
F(x) = \bigvee_{i=1}^m f_{s_i}(x)
$$
for $s_i$ independent random elements of $\{\pm 1\}^n$. We note that $F(x)$ is always the indicator of a union of at most $k$ half-spaces, but we also claim that
$$
\E_{s_i}[\as(F)] = \Omega\left(\sqrt{n\log(k)}\right).
$$
This would imply our result for appropriately chosen values of the $s_i$.

We note that $\as(F)$ is $2^{1-n}$ times the number of pairs of adjacent elements $x,y$ of the hypercube so that $F(x)=1,F(y)=0$. This in turn is at least $2^{1-n}$ times the sum over $1\leq i \leq m$ of the number of pairs of adjacent elements of the hypercube $x,y$ so that $f_{s_i}(x)=1,f_{s_i}(y)=0$ and so that $f_{s_j}(x)=f_{s_j}(y)=0$ for all $j\neq i$.

On the other hand, for each $i$, $2^{1-n}$ times the number of pairs of adjacent elements $x,y$ so that $f_{s_i}(x)=1,f_{s_i}(y)=0$ is $$\as(f_{s_i})=\as(f)=\Omega(\E[f(x)]\sqrt{n\log(k)}) = \Omega(m^{-1}\sqrt{n\log(k)}).$$ For each of these pairs, we consider the probability over the choice of $s_j$ that $f_{s_j}(x)=1$ or $f_{s_j}(y)=1$ for some $j\neq i$. We note that for each fixed $x$ and $j$ that
$$
\pr_{s_j}(f_{s_j}(x)=1)=\E_{s_j}[f_{s_j}(x)]=\E_{s_j}[f_x(s_j)] = \E_z[f(z)] \leq \frac{1}{4m}.
$$
Thus, by the union bound, the probability that either $f_{s_j}(x)=1$ or $f_{s_j}(y)=1$ for some $j\neq i$ is at most $1/2$. Therefore, the expected number of adjacent pairs $x,y$ with $f_{s_i}(x)=1$, $f_{s_i}(y)=0$ and $f_{s_j}(x)=f_{s_j}(y)=0$ for all $j\neq i$ is at least $\as(f_{s_j})/2$. Therefore,
$$
\E_{s_i}[\as(F)] \geq \sum_{i=1}^m \as(f)/2 = m  \Omega(m^{-1}\sqrt{n\log(k)}) =  \Omega(\sqrt{n\log(k)}),
$$
as desired. This completes our proof.

\end{proof}

\section{Learning Theory Application}\label{AppSec}

The proofs of Corollaries \ref{nsBoundCor} and \ref{LearningCor} are by what are now fairly standard techniques, but are included here for completeness. The proof of Corollary \ref{nsBoundCor} is by a technique of Diakonikolas et. al. in \cite{sense3} for bounding the noise sensitivity in terms of the average sensitivity.
\begin{proof}[Proof of Corollary \ref{nsBoundCor}]
As the noise sensitivity is an increasing function of $\epsilon$, we may round $\epsilon$ down to $1/\lceil \epsilon^{-1} \rceil$, and thus it suffices to consider $\epsilon=1/m$ for some integer $m$. We note that the pair of random variables $x$, $y$ used to define the noise sensitivity with parameter $\epsilon$ can be generated in the following way:
\begin{enumerate}
\item Randomly divide the $n$ coordinates into $m$ bins.
\item Randomly assign each coordinate a value in $\{\pm 1\}$ to obtain $z$.
\item For each bin randomly pick $b_i\in \{\pm 1\}$. Obtain $x$ from $z$ by multiplying all coordinates in the $i^{th}$ bin by $b_i$ for each $i$.
\item Obtain $y$ from $x$ by flipping the sign of all coordinates in a randomly chosen bin.
\end{enumerate}
We note that this produces the same distribution on $x$ and $y$ since $x$ is clearly a uniform element of $\{\pm 1\}^n$ and the $i^{th}$ coordinate of $y$ differs from the corresponding coordinate of $x$ if and only if $i$ lies in the bin selected in step 4. This happens independently and with probability $1/m$ for each coordinate.

Next let $f$ be the indicator function of an intersection of at most $k$ halfspaces. Note that after the bins and $z$ are picked in steps 1 and 2 above that $f(x)$ is given by $g(b)$ where $g$ is the indicator function of an intersection of at most $k$ halfspaces in $m$ variables. In the same notation, $f(y)=g(b')$ where $b'$ is obtained from $b$ by flipping the sign of a single random coordinate. Thus, by definition, $\pr(g(b)\neq g(b'))=\frac{1}{m}\as(g)$. Hence,
$$
\ns_\epsilon(f) = \E_g\left[\frac{\as(g)}{m}\right] \leq \frac{O(\sqrt{\log(k)m})}{m} = \sqrt{\frac{\log(k)}{m}} = \sqrt{\epsilon \log(k)}.
$$
This completes our proof.
\end{proof}

Corollary \ref{LearningCor} will now follow by using this bound to bound the weight of the higher degree Fourier coefficients of such an $f$ and then using the $L^1$ polynomial regression algorithm of \cite{learning}.
\begin{proof}[Proof of Corollary \ref{LearningCor}]
Let $f$ be the indicator function of an intersection of $k$ halfspaces. Let $f$ have Fourier transform given by
$$
f = \sum_{S\subset [n]} \chi_S \hat{f}(S).
$$
It is well known that for $\rho\in (0,1)$ that
$$
\ns_\rho(f) = 2\sum_{S\subset [n]} (1-(1-2\rho)^{|S|})|\hat{f}(S)|^2.
$$
Therefore, we have that
$$
\ns_\rho(f) \gg \sum_{|S| > 1/\rho} |\hat{f}(S)|^2.
$$
By Corollary \ref{nsBoundCor}, this tells us that
$$
\sum_{|S| > 1/\rho} |\hat{f}(S)|^2 = O(\sqrt{\rho\log(k)}).
$$
Setting $\rho = \epsilon^2/(C\log(k))$ for sufficiently large values of $C$ yields
$$
\sum_{|S| > C\log(k)\epsilon^{-2}} |\hat{f}(S)|^2 < \epsilon.
$$
Our claim now follows from \cite{learning} Remark 4.
\end{proof}

\section*{Acknowledgements}

This work was done with the support of an NSF postdoctoral research fellowship.

\end{document}